\newcommand{\CLASSINPUTtoptextmargin}{2cm}
\documentclass[conference]{IEEEtran}
\IEEEoverridecommandlockouts
\usepackage{cite}
\usepackage{amsmath,amssymb,amsfonts,amsthm,bm,mathtools}
\usepackage{algorithm}
\usepackage{algpseudocode}
\usepackage{graphicx}
\usepackage{textcomp}
\usepackage{xcolor}
\def\BibTeX{{\rm B\kern-.05em{\sc i\kern-.025em b}\kern-.08em
    T\kern-.1667em\lower.7ex\hbox{E}\kern-.125emX}}

\newcommand{\C}{\mathbb{C}}
\newcommand{\E}{\mathbb{E}}
\newcommand{\tr}{\operatorname{tr}}
\newcommand{\diag}{\operatorname{diag}}

\newcommand{\Herm}{\mathsf{H}}
\newcommand{\T}{\mathsf{T}}
\newcommand{\I}{\mathbf{I}}
\newcommand{\0}{\mathbf{0}}

\newcommand{\snrdl}{\mathsf{snr}_{\mathrm{dl}}}

\newcommand{\Ch}{\mathbf{C}_{\mathbf{h}_k}}
\newcommand{\Cu}{\mathbf{C}_{\mathbf{\tilde h}_k }}
\newcommand{\Cub}{\mathbf{C}_{\mathbf{\tilde h}_k }^{(b)}}
\newcommand{\Cue}{\mathbf{C}_{\mathbf{\tilde h}_k }^{(e)}}
\newcommand{\Cq}{\mathbf{C}_{\mathbf{q}}}

\newcommand{\Kmat}{\mathbf{K}}
\newcommand{\Fmat}{\mathbf{F}}
\newcommand{\Xmat}{\mathbf{X}}

\newlength{\figgap}             % 1. 변수 'figgap' 생성 (이름 맘대로 가능)
\definecolor{royalblue}{RGB}{0, 80, 200}
 
\newtheorem{theorem}{Theorem}
\newtheorem{Theorem 2}{Theorem 2}

\theoremstyle{remark}

\begin{document}

\title{CSI Compression for Massive MIMO-OFDM: Mismatch-Aware Rate-Distortion Trade-offs}
% {\footnotesize \textsuperscript{*}Note: Sub-titles are not captured for https://ieeexplore.ieee.org  and
% should not be used}
% \thanks{Identify applicable funding agency here. If none, delete this.}

\author{\IEEEauthorblockN{Bumsu Park, Youngmok Park, Chanho Park, and Namyoon Lee}
\IEEEauthorblockA{\textit{Department of Electrical Engineering,} \\
POSTECH,\\
Pohang, South Korea \\
Email: \{bumsupark, ympark1999, chanho26, nylee\}@postech.ac.kr}
}

\maketitle

\begin{abstract}
 We study channel state information (CSI) compression for wideband frequency division duplex massive multiple-input multiple-output (MIMO) when the base station (BS) reconstructs CSI using an imperfect covariance model. Under matched second-order statistics, remote rate--distortion theory yields transform coding with reverse water-filling (RWF) over covariance eigenmodes. With decoder-side covariance mismatch, however, this allocation is no longer end-to-end optimal. We derive an achievable mismatched Gaussian rate--distortion characterization based on a Gaussian test channel and a mismatched minimum mean square error (MMSE) reconstruction rule. In a shared-eigenvector regime (common eigenbasis, mismatched eigenvalues), the problem decouples across modes and leads to a robust reverse water-filling (RRWF) allocation computable via bisection and per-mode root finding. Simulations using  wideband massive MIMO covariance models show that RRWF consistently improves reconstruction distortion and end-to-end mean square error relative to conventional RWF under mismatch.\end{abstract}

% \begin{IEEEkeywords}
% component, formatting, style, styling, insert.
% \end{IEEEkeywords}

  \section{Introduction}\label{sec:intro}
Massive multiple-input multiple-output (MIMO) turns many antennas into spatial degrees of freedom, but the gains hinge on accurate channel state information at the transmitter (CSIT) \cite{marzetta2010noncooperative,5GMag}. In frequency-division duplex (FDD), CSIT must be acquired from downlink (DL) training and fed back from the user equipment (UE) \cite{Lee2014,Caire2006,kim2024fdd,Han2024}. The dimension of this feedback problem scales with both array size and bandwidth, e.g., orthogonal frequency division multiplexing (OFDM) subcarriers, and becomes a bottleneck in wideband massive MIMO.

Recently, learning-based channel state information (CSI) feedback has shown impressive empirical gains by training autoencoder-style architectures end-to-end \cite{deeplearning2018,Guo2022_Overview,Lightweight_CSI_feedback_2021,Attention_CSI_Feedback_2022,Learningtopitmize_2025,park2025transformer,park2024multi,wen2018deep}. Starting from CsiNet \cite{deeplearning2018}, subsequent designs have added deeper convolutional backbones and attention/transformer modules to improve reconstruction accuracy \cite{Lightweight_CSI_feedback_2021,Attention_CSI_Feedback_2022,park2025transformer}. The cost, however, is paid where it hurts most: at the UE. State-of-the-art transformer encoders are memory- and compute-intensive—often an order of magnitude heavier than earlier networks—making them difficult to deploy on power- and silicon-constrained devices. This brings the CSI feedback problem to a familiar engineering tension: how to retain the performance of sophisticated models while respecting the tight complexity budget at the edge.

A key observation is that, unlike images, massive MIMO channels admit a strong statistical structure. Over a wide range of measurements and standardized models, the DL channel is well-approximated as a correlated complex Gaussian vector: each antenna sees a superposition of many scattering components, and as the array grows, non-Gaussian effects average out, leaving the channel largely characterized by its covariance \cite{COST2100,3GPP38901,AdhikaryJSDM,NYUSIM,Han2024}. This covariance encodes the underlying geometry—angular spreads, clusters, and frequency selectivity—and provides a natural basis for compression.

A useful benchmark comes from remote rate--distortion (RD) theory. Under the standard correlated complex Gaussian channel model, the UE forms a minimum mean square error (MMSE) estimate from pilots, then compresses that estimate under a feedback-rate constraint, and the base station (BS) reconstructs it \cite{Caire2025,CoverThomas}. When encoder and decoder share the same second-order statistics, the MMSE estimate is Gaussian and the optimal RD trade-off is achieved by transform coding with reverse water-filling (RWF) over the covariance eigenmodes \cite{TC2000TIT,TC2001Mag,Caire2025}. This establishes both a performance limit and a concrete design: a covariance-matched linear transform and a mode-dependent bit allocation.

The matched-statistics assumption, however, is fragile in FDD. The UE can infer the relevant covariance from its DL observation model, but the BS often reconstructs using an \emph{imperfect} covariance due to uplink--downlink calibration errors, nonstationarity, and finite-sample covariance learning noise \cite{khalilsarai2018fdd,almradi2020hybrid,bjornson2014massive,yang2023structured}. Under such mismatch, classical RWF need not be end-to-end optimal: bits can be spent on modes that the decoder cannot reliably exploit, while modes that matter under the decoder’s model may be under-served.

This paper develops a mismatch-aware RD benchmark for FDD CSIT feedback. Our goal is to quantify the fundamental penalty of decoder-side covariance mismatch and to derive a tractable, robust bit-allocation rule. Concretely:
\begin{itemize}
\item We derive an \emph{achievable} mismatched Gaussian RD characterization for compressing the UE-side MMSE estimate using a Gaussian test channel and a mismatched MMSE reconstruction rule at the BS.
\item We specialize to a shared-eigenvector regime (common eigenbasis, mismatched eigenvalues), motivated by the relative stability of long-term spatial subspaces. In this case the problem decouples into parallel scalar modes, and the optimal allocation is characterized by a per-mode Karush–Kuhn–Tucker (KKT) condition, yielding a \emph{robust reverse water-filling}
(RRWF) rule. 

\item We provide an implementable procedure based on bisection over the Lagrange multiplier and per-mode root finding to compute the RRWF allocation efficiently.
\end{itemize}
Numerical results for massive MIMO-OFDM covariance models show that RRWF consistently reduces reconstruction distortion and improves end-to-end normalized mean square error (NMSE) relative to classical RWF when the BS reconstruction statistics are mismatched.

\section{CSI Compression and Reconstruction Problem}\label{sec:system}
We consider a single cell massive MIMO-OFDM systems, where a BS with $M$ antennas operates in OFDM mode over $N$ subcarriers and serves $K\le M$ single-antenna users. The channel is block fading, i.e., CSI is constant over one coherence frame and changes independently frame to frame. During DL common training, the BS transmits pilots over $T_p$ out of $T$ OFDM symbols and over a subset $\mathcal{N}_p\subseteq\{1,\dots,N\}$ with $N_p\triangleq |\mathcal{N}_p|\le N$ subcarriers, yielding pilot dimension
\begin{equation}
L^{\rm tr} = T_p N_p.
\end{equation}

\subsection{Downlink Pilot Observation Model}

Let $\mathbf{h}_k[n]\in\C^M$ be the DL channel vector of user $k$ on subcarrier $n$ and let $\mathbf{X}[n]\in\C^{T_p\times M}$ be the pilot matrix on subcarrier $n$. The received pilot signal of user $k$ over subcarrier $n\in\mathcal{N}_p$ is
\begin{equation}\label{eq:pilot_scalar}
\mathbf{y}^{\mathrm{tr}}_{k}[n] = \mathbf{X}[n]\mathbf{h}_k[n] + \mathbf{n}^{\mathrm{tr}}_{k}[n],
\quad \forall k\in\{1,\dots,K\},~ n\in\mathcal{N}_p,
\end{equation}
where $\mathbf{n}^{\mathrm{tr}}_{k}[n]\sim\mathcal{CN}(\0,\I)$ is AWGN. The pilot entries satisfy
\begin{equation}
X_{i,j}[n] \sim \mathcal{CN}\!\left(0,\frac{\snrdl}{M}\right),
\end{equation}
so that $\snrdl$ corresponds to a DL pre-beamforming signal-to-noise ratio (SNR) under normalized noise power.

Stacking all $L^{\rm tr}$ pilot observations, define
\begin{equation}
\mathbf{y}^{\mathrm{tr}}_{k}
\triangleq
\big[\mathbf{y}^{\mathrm{tr}}_{k}[n_1]^{\T},\dots,\mathbf{y}^{\mathrm{tr}}_{k}[n_{N_p}]^{\T}\big]^{\T}
\in\C^{L^{\rm tr}},
\end{equation}
and the stacked channel
\begin{equation}
\mathbf{h}_k
\triangleq
\big[\mathbf{h}_k[1]^{\T},\dots,\mathbf{h}_k[N]^{\T}\big]^{\T}
\in\C^{MN}.
\end{equation}
From the statitical model 
the stacked channel is accurately modeled as the complex Gaussiaon with covariance matrix $\Ch=\E[\mathbf{h}_k\mathbf{h}_k^{\Herm}]$, i.e., 
\begin{align}
    \mathbf{h}_k\sim\mathcal{CN}(\0,\Ch).
\end{align}
 Then the overall pilot observation is
\begin{equation}\label{eq:pilot_stacked}
\mathbf{y}^{\mathrm{tr}}_{k} = \Xmat \mathbf{h}_k + \mathbf{n}^{\mathrm{tr}}_{k}, \quad \mathbf{n}^{\mathrm{tr}}_{k}\sim\mathcal{CN}(\0,\I_{L^{\rm tr}}),
\end{equation}
where $\Xmat\in\C^{L^{\rm tr}\times MN}$ collects all pilot matrices across probed subcarriers (with zero blocks for unprobed subcarriers).

\subsection{ CSI Estimation}
Using the pilot matrix ${\bf X}$, the user forms the MMSE estimate
\begin{equation}\label{eq:mmse}
\mathbf{\tilde h}_k \triangleq \E[\mathbf{h}_k|\mathbf{y}^{\mathrm{tr}}_{k}]
= \Ch \Xmat^{\Herm}\big(\Xmat\Ch\Xmat^{\Herm} + \I_{L^{\rm tr}}\big)^{-1}\mathbf{y}^{\mathrm{tr}}_{k},
\end{equation}
whose covariance is
\begin{equation}\label{eq:Cu}
\Cu = \E[\mathbf{\tilde h}_k \mathbf{\tilde h}_k ^{\Herm}]
= \Ch \Xmat^{\Herm}\big(\Xmat\Ch\Xmat^{\Herm} + \I_{L^{\rm tr}}\big)^{-1}\Xmat\Ch,
\end{equation}
and the corresponding mean square error (MSE) boils down to
\begin{align}\label{eq:Dmmse}
D_k^{\mathrm{mmse}} =\tr(\Ch-\Ch \Xmat^{\Herm}\big(\Xmat\Ch\Xmat^{\Herm} + \I_{L^{\rm tr}}\big)^{-1}\Xmat\Ch).
\end{align}
The resulting MMSE channel estimate \(\mathbf{\tilde h}_k\) follows a circularly
symmetric complex Gaussian distribution with covariance matrix
\(\mathbf{C}_{\mathbf{\tilde h}_k}\) defined in \eqref{eq:Cu}, i.e.,
\begin{equation}
\mathbf{\tilde h}_k \sim \mathcal{CN}\!\left(\mathbf{0},
\mathbf{C}_{\mathbf{\tilde h}_k}\right).
\end{equation}
Since \(\mathbf{\tilde h}_k\) is a complex Gaussian vector, it can be optimally compressed using a transform coding framework based on the eigen-decomposition of \(\mathbf{C}_{\mathbf{\tilde h}_k}\), followed by RWF rate allocation across the resulting eigenmodes \cite{TC2000TIT,Caire2025}. This approach is known to achieve the information-theoretic RD limit for complex Gaussian sources under mean-squared error distortion, as established in the classical literature \cite{CoverThomas,TC2000TIT,TC2001Mag}.

\subsection{Noisy CSI Compression under Encoder and Decoder Knowledge Mismatch}
In practice, the UE and the BS rarely share identical second-order statistics for CSI compression and reconstruction \cite{khalilsarai2018fdd, yang2023structured, almradi2020hybrid, bjornson2014massive}. Such covariance mismatch naturally arises due to several factors, including imperfect uplink–downlink calibration that breaks channel reciprocity, differences in training length and pilot power between uplink and downlink, and estimation errors in covariance learning. As a result, the encoder and decoder often rely on mismatched covariance information, which fundamentally alters the optimal CSI compression and reconstruction strategy.

We model this asymmetry by allowing the encoder and decoder to rely on different covariance matrices:
\begin{itemize}
\item the \emph{encoder-side} covariance model $\Cue \succ \mathbf{0}$ available at the UE, and
\item the \emph{decoder-side} covariance model $\Cub \succ \mathbf{0}$ available at the BS.
\end{itemize}
The true source law is governed by $\Cu$, while $\Cue$ and $\Cub$ represent (possibly outdated or imperfect) covariance knowledge used for designing the encoder and decoder, respectively.

We consider a blocklength-$n$ source coding formulation for compressing an i.i.d. sequence 
$\{\mathbf{\tilde h}_k[t]\}_{t=1}^n$.
An $(n,R)$ compression scheme consists of
\begin{align}
&\text{Encoder:}\quad 
\phi_n: (\C^{MN})^n \times \mathbb{S}_{++}^{MN} \to \{1,\dots,2^{nR}\}, \nonumber\\
&\hspace{22mm} {\bf b}_k = \phi_n(\mathbf{\tilde h}_k^n;\Cue), \label{eq:enc_mismatch}\\[1mm]
&\text{Decoder:}\quad 
\psi_n: \{1,\dots,2^{nR}\}\times \mathbb{S}_{++}^{MN} \to (\C^{MN})^n, \nonumber\\
&\hspace{22mm} {\mathbf{\hat u}}_k^n = \psi_n({\bf b}_k;\Cub), \label{eq:dec_mismatch}
\end{align}
where $\mathbb{S}_{++}^{MN}$ denotes the cone of $MN\times MN$ Hermitian positive definite matrices.
The encoder is designed using $\Cue$, while the decoder is designed using $\Cub$.
 
The performance metric is the \emph{true} MSE under the true distribution $\mathbf{\tilde h}_k\sim\mathcal{CN}(\mathbf{0},\Cu)$:
\begin{equation}\label{eq:dist_mismatch_block}
\mathsf{D}_n(R;\Cu,\Cue,\Cub)
\triangleq
\frac{1}{n}\E\Big[\big\|\mathbf{\tilde h}_k^n-{\mathbf{\hat u}}_k^n\big\|^2\Big],
\end{equation}
where the expectation is taken w.r.t. the true law of $\mathbf{\tilde h}_k^n$ and the (possibly randomized) encoder/decoder.

We define the mismatch-aware distortion--rate (DR) function as
\begin{align}\label{eq:DR_mismatch_def}
&D_k^{\mathrm{quant}}(R;\Cu,\Cue,\Cub)
\nonumber\\
&\triangleq
\inf_{\{\phi_n,\psi_n\}_{n\ge 1}}
\ \limsup_{n\to\infty}\ \mathsf{D}_n(R;\Cu,\Cue,\Cub),
\end{align}
where the infimum is over all sequences of encoders $\phi_n(\cdot;\Cue)$ and decoders $\psi_n(\cdot;\Cub)$ satisfying the rate constraint $|{\bf b}_k|\le 2^{nR}$.
 
Combining the standard remote DR decomposition and the covariance mismatch effect~\cite{Caire2025}, we define the end-to-end CSI distortion as
\begin{equation}\label{eq:remote_mismatch_benchmark}
D_{k}^{\mathrm{E2E}}(R)
=D_k^{\mathrm{mmse}}
+
 D_k^{\mathrm{quant}}(R;\Cu,\Cue,\Cub),
\end{equation}
where $D_k^{\mathrm{mmse}}=\tr(\Ch-\Cu)$ is the MMSE error from DL training derived in \eqref{eq:Dmmse}.

Without loss of generality, to isolate the impact of encoder--decoder statistical mismatch, we assume that the encoder has perfect knowledge of the source covariance, i.e., $\Cue=\Cu$, while the decoder reconstructs using an imperfect covariance model $\Cub\neq\Cu$ in the sequel.

\section{Mismatch-Aware Robust Rate--Distortion  }\label{sec:theorem}

We provide an achievable mismatch-aware RD characterization for compressing $\mathbf{\tilde h}_k $ under a Gaussian test channel and mismatched MMSE reconstruction based on $\Cub$.

The following theorem is our major result of this paper. 

\begin{theorem}[Mismatched Rate-Distortion]\label{thm:mismatchRD}
Fix any $\Cq\succ\0$ and consider the Gaussian test channel
\begin{equation}\label{eq:testchannel}
\mathbf{\hat h}_k = \mathbf{\tilde h}_k +\mathbf{q}, \qquad
\mathbf{q}\sim\mathcal{CN}(\0,\Cq), \qquad \mathbf{q}\perp\!\!\!\perp \mathbf{\tilde h}_k .
\end{equation}
Let the BS reconstruct using the \emph{mismatched} MMSE filter computed from $\Cub$:
\begin{equation}\label{eq:mismatchedMMSE}
\widetilde{\mathbf{u}}_k = \Kmat \mathbf{\hat h}_k, \qquad
\Kmat \triangleq \Cub(\Cub+\Cq)^{-1}.
\end{equation}
Then there exists a sequence of (block) source codes such that the achievable rate satisfies
\begin{equation}\label{eq:rate_det}
R(\Cq) = \log_2\frac{|\Cu+\Cq|}{|\Cq|}
= \log_2\big|\I+\Cu \Cq^{-1}\big|,
\end{equation}
and the resulting end-to-end distortion (true MSE) is
\begin{align}\label{eq:dist_trace}
&D_k^{\mathrm{quant}}(\Cq)
= \E\big[\|\mathbf{\tilde h}_k -\widetilde{\mathbf{u}}_k\|^2\big] \nonumber\\
%&= \tr\!\Big(
%(\I-\Kmat)\Cu(\I-\Kmat)^{\Herm} + \Kmat \Cq \Kmat^{\Herm}
%\Big) \nonumber\\
&= \tr\!\Big(
\Cu - \Kmat\Cu - \Cu \Kmat^{\Herm} + \Kmat(\Cu+\Cq)\Kmat^{\Herm}
\Big),
\end{align}
where $\Kmat=\Cub(\Cub+\Cq)^{-1}$.

Consequently, for any target rate $R$, the following mismatch-aware achievable distortion bound holds:
\begin{equation}\label{eq:RD_ach}
D_k^{\mathrm{quant}}(R)\ \le\
\min_{\Cq\succ \0:\ R(\Cq)\le R}\ D_k^{\mathrm{quant}}(\Cq).
\end{equation}
Moreover, if $\Cub=\Cu$ and $\Cq$ is optimized in the eigenbasis of $\Cu$, \eqref{eq:RD_ach} reduces to the classical Gaussian RWF solution.
\end{theorem}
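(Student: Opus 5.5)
The approach is the standard Gaussian test-channel achievability argument, with the decoder's reconstruction replaced by a fixed, possibly mismatched linear map. The proof then splits into three parts: achievability of the rate, evaluation of the distortion, and the reduction to reverse water-filling.

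\textbf{Achievability of the rate.} Fix $\Cq\succ\0$ and define the joint law of $(\mathbf{\tilde h}_k,\mathbf{\hat h}_k)$ through \eqref{eq:testchannel}. Under the true law $\mathbf{\tilde h}_k\sim\mathcal{CN}(\0,\Cu)$, the output is $\mathbf{\hat h}_k\sim\mathcal{CN}(\0,\Cu+\Cq)$. The Gaussian entropy formula then gives
\[
I(\mathbf{\tilde h}_k;\mathbf{\hat h}_k)=h(\mathbf{\hat h}_k)-h(\mathbf{q})=\log_2\frac{|\Cu+\Cq|}{|\Cq|},
\]
which is \eqref{eq:rate_det}. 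We use the encoder-side assumption $\Cue=\Cu$: the UE knows the true law, so it can build a random codebook of $2^{n(R(\Cq)+\epsilon)}$ i.i.d.\ codewords drawn from $\mathcal{CN}(\0,\Cu+\Cq)$ and apply joint-typicality encoding. By the standard covering lemma for continuous (Gaussian) sources, e.g.\ Cover--Thomas \cite{CoverThomas}, the selected codeword $\mathbf{\hat h}_k^n$ is jointly typical with $\mathbf{\tilde h}_k^n$ with probability tending to one. The BS only needs the index and the codebook to recover $\mathbf{\hat h}_k^n$. It then applies the fixed linear map $\Kmat$ componentwise. This map depends on $\Cub$ and $\Cq$ only; both are available at the decoder, with $\Cq$ treated as a design parameter shared in advance.

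\textbf{Distortion.} The distortion $\|\mathbf{x}-\Kmat\mathbf{y}\|^2$ is a quadratic function of the jointly typical pair. Its empirical average therefore converges to its expectation under the test-channel joint law. This is the main technical obstacle, because the function is unbounded. I would first handle it with a truncation/typicality argument: restrict to the typical set where second moments are close, and bound the atypical contribution using uniform integrability of Gaussian second moments (via Cauchy--Schwarz, with the atypical probability going to zero). This shows the limsup distortion is at most
\[
\E\|\mathbf{\tilde h}_k-\Kmat\mathbf{\hat h}_k\|^2+\delta(\epsilon).
\]
The expectation is a direct computation. Using $\E[\mathbf{\hat h}_k\mathbf{\tilde h}_k^{\Herm}]=\Cu$ and $\E[\mathbf{\hat h}_k\mathbf{\hat h}_k^{\Herm}]=\Cu+\Cq$ and expanding the trace yields \eqref{eq:dist_trace}. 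Taking the infimum over feasible $\Cq$ and letting $\epsilon\to0$ gives \eqref{eq:RD_ach}. Each feasible $\Cq$ gives an achievable pair, so the minimum bounds the operational quantity defined in \eqref{eq:DR_mismatch_def}.

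\textbf{Reduction to RWF.} Now set $\Cub=\Cu=\mathbf{U}\Lambda\mathbf{U}^{\Herm}$ and $\Cq=\mathbf{U}\diag(\sigma_i^2)\mathbf{U}^{\Herm}$. Then $\Kmat$ is the true MMSE filter, and \eqref{eq:dist_trace} collapses to $\tr(\Cu-\Kmat\Cu)$. This equals
\[
\sum_i \frac{\lambda_i\sigma_i^2}{\lambda_i+\sigma_i^2},
\qquad\text{with}\qquad
R=\sum_i\log_2\!\Big(1+\frac{\lambda_i}{\sigma_i^2}\Big).
\]
Substitute $d_i=\lambda_i\sigma_i^2/(\lambda_i+\sigma_i^2)$, so that $R=\sum_i\log_2(\lambda_i/d_i)$. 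The problem becomes the classical convex program of minimizing $\sum_i d_i$ subject to the rate constraint. Its KKT conditions give $d_i=\min(\theta,\lambda_i)$, i.e.\ reverse water-filling. Modes with $\lambda_i\le\theta$ correspond to $\sigma_i^2\to\infty$, which is reached as a limit of positive definite $\Cq$. This allocation matches the known Gaussian RD function, so the bound is tight in the matched case.
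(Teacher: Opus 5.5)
Your proposal is correct and follows the paper's proof. It uses the same Gaussian test-channel mutual-information computation, the same appeal to standard random coding for the rate, and the same direct second-moment evaluation of the mismatched error; your cross-covariance expansion is equivalent to the paper's decomposition $(\I-\Kmat)\mathbf{\tilde h}_k-\Kmat\mathbf{q}$. You also supply details the paper leaves implicit: the truncation argument for the unbounded quadratic distortion, and an explicit KKT derivation of RWF. That derivation brings out the useful point that modes with $\lambda_i\le\theta$ need $\sigma_i^2\to\infty$, so the minimum over $\Cq\succ\0$ is in general only an infimum.

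One caveat, which the paper shares: the codebook drawn from $\mathcal{CN}(\0,\Cu+\Cq)$ must also be known at the BS, so the decoder implicitly carries $\Cu$. This does not affect the inequality, since the operational infimum ranges over arbitrary decoder maps $\psi_n$. Still, it deserves an explicit remark given the mismatch premise and your own claim that the decoder needs only $\Cub$ and $\Cq$.
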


\begin{proof}
\label{subsec:proof_thm}
The proof of Theorem~\ref{thm:mismatchRD}
 consists of (i) achievability of the rate expression \eqref{eq:rate_det} for the Gaussian test channel, and (ii) evaluation of the true distortion under mismatched MMSE reconstruction \eqref{eq:mismatchedMMSE}.
 
 Under \eqref{eq:testchannel}, $\mathbf{\hat h}_k\sim\mathcal{CN}(\0,\Cu+\Cq)$ and
\begin{equation}\label{eq:mutualinfo}
\mathcal{I}(\mathbf{\tilde h}_k ;\mathbf{\hat h}_k)
= h(\mathbf{\hat h}_k)-h(\mathbf{\hat h}_k|\mathbf{\tilde h}_k )
= h(\mathbf{\hat h}_k)-h(\mathbf{q}),
\end{equation}
where $\mathcal{I}(X;Y)$ and $h(X)$ denote the mutual information between $X$ and $Y$ and the differential entropy of $X$, respectively. For an $n$-dimensional circularly symmetric complex Gaussian vector $\mathbf{z}\sim\mathcal{CN}(\0,\mathbf{C})$, the differential entropy is
\[
h(\mathbf{z}) = \log\big((\pi e)^n|\mathbf{C}|\big).
\]
Applying this to \eqref{eq:mutualinfo} yields
\[
R(\Cq)= \mathcal{I}(\mathbf{\tilde h}_k ;\mathbf{\hat h}_k)
=\log_2\frac{|\Cu+\Cq|}{|\Cq|},
\]
which is \eqref{eq:rate_det}.
By standard random coding arguments for memoryless Gaussian sources, there exists a sequence of vector quantizers (source codes) approaching the test channel \eqref{eq:testchannel} with rate arbitrarily close to $R(\Cq)$.

Given $\mathbf{\hat h}_k=\mathbf{\tilde h}_k +\mathbf{q}$ and $\widetilde{\mathbf{u}}_k=\Kmat\mathbf{\hat h}_k$ with $\Kmat=\Cub(\Cub+\Cq)^{-1}$, the error is
\begin{equation}\label{eq:error}
\mathbf{e}\triangleq \mathbf{\tilde h}_k -\widetilde{\mathbf{u}}_k
= (\I-\Kmat)\mathbf{\tilde h}_k - \Kmat\mathbf{q}.
\end{equation}
Since $\mathbf{\tilde h}_k $ and $\mathbf{q}$ are independent and zero mean,
\begin{align}\label{eq:err_cov}
\E[\mathbf{e}\mathbf{e}^{\Herm}]
&= (\I-\Kmat)\E[\mathbf{\tilde h}_k \mathbf{\tilde h}_k ^{\Herm}](\I-\Kmat)^{\Herm}
+ \Kmat \E[\mathbf{q}\mathbf{q}^{\Herm}] \Kmat^{\Herm}\nonumber\\
&= (\I-\Kmat)\Cu(\I-\Kmat)^{\Herm}+\Kmat \Cq \Kmat^{\Herm}.
\end{align}
Therefore,
\[
D_k^{\mathrm{quant}}(\Cq)=\E[\|\mathbf{e}\|^2]=\tr(\E[\mathbf{e}\mathbf{e}^{\Herm}]),
\]
which gives the first line of \eqref{eq:dist_trace}. Expanding
\[
(\I-\Kmat)\Cu(\I-\Kmat)^{\Herm} = \Cu - \Kmat\Cu - \Cu \Kmat^{\Herm} + \Kmat\Cu\Kmat^{\Herm}
\]
and adding $\Kmat\Cq\Kmat^{\Herm}$ yields the second line of \eqref{eq:dist_trace}.

Each $\Cq\succ 0$ yields an achievable pair $(R(\Cq),D_k^{\mathrm{quant}}(\Cq))$. Minimizing distortion subject to $R(\Cq)\le R$ yields \eqref{eq:RD_ach}.
When $\Cub=\Cu$, optimizing $\Cq$ in the eigenbasis of $\Cu$ recovers the classical Gaussian RD RWF \cite{TC2000TIT,CoverThomas}.
\end{proof}

Theorem 1 provides a RD trade-off under a general structure of the encoder and decoder structure mismatch. Therefore, this result can be  extend to various types of channel model (e.g, Gaussian mixture  \cite{park2026fundamental,Benedikt2022GMM}). However, it is intractable to optimize $\Cq$ to minimize distortion subject to $R(\Cq)\le R$ due to the general strucutre of $D_k^{\mathrm{quant}}(\Cq)$. 

\section{Mismatch-Aware Robust Reverse Waterfilling}\label{sec:Theorem 2}
A tractable and practically relevant regime is when $\Cu$ and $\Cub$ share eigenvectors (e.g., when $M\to \infty$ for array antennas, eigenvectors approximate deterministic DFT vectors \cite{AdhikaryJSDM}, by Szego's asymptotic results on Toeplitz matrices \cite{grenander1958toeplitz}). We restrict $\Cq$ to the same eigenbasis. The following theorem shows how to generalize the conventional RWF rate allocation strategy when the decoder has imperfect knowledge of the eigenvalues. 

\begin{theorem}[Rate-distortion under eigenvlaue mismatch]\label{cor:modifiedWF}
Assume there exists a unitary $\Fmat$ such that
\begin{equation}\label{eq:simdiag}
\Cu = \Fmat \diag(\lambda_1,\dots,\lambda_{MN})\Fmat^{\Herm}
\end{equation}
and
\begin{equation}
\Cub = \Fmat \diag(\lambda^{(b)}_1,\dots,\lambda^{(b)}_{MN})\Fmat^{\Herm},
\end{equation}
with $\lambda_i>0$ and $\lambda_i^{(b)}>0$. Restrict
\begin{equation}\label{eq:Cq_diag}
\Cq = \Fmat \diag(d_1,\dots,d_{MN})\Fmat^{\Herm},\qquad d_i>0.
\end{equation}
Then the achievable rate decomposes as
\begin{equation}\label{eq:rate_sum}
R = \sum_{i=1}^{MN} r_i(d_i),\qquad
r_i(d_i)=\log_2\Big(1+\frac{\lambda_i}{d_i}\Big),
\end{equation}
and the true distortion decomposes as
\begin{equation}\label{eq:D_sum}
D_k^{\mathrm{quant}} = \sum_{i=1}^{MN} e_i(d_i),
\qquad
e_i(d_i)=\frac{\lambda_i d_i^2 + (\lambda^{(b)}_i)^2 d_i}{(\lambda^{(b)}_i+d_i)^2}.
\end{equation}
Moreover, the optimal allocation $\{d_i^\star\}$ solving
\begin{equation}\label{eq:opt_problem}
\min_{d_i>0}\ \sum_{i=1}^{MN} e_i(d_i)
\quad \text{s.t.}\quad
\sum_{i=1}^{MN} r_i(d_i)\le R
\end{equation}
satisfies: there exists $\mu\ge 0$ such that for each active mode $i$,
\begin{equation}\label{eq:cubic}
(\ln 2)\,\lambda_i^{(b)}\, d_i(\lambda_i+d_i)\Big((\lambda_i^{(b)})^2+(2\lambda_i-\lambda_i^{(b)})d_i\Big)
=
\mu\,\lambda_i(\lambda_i^{(b)}+d_i)^3.
\end{equation}
When $\lambda_i^{(b)}=\lambda_i$ for all $i$ (matched case), \eqref{eq:cubic} reduces to the classical RWF solution.
\end{theorem}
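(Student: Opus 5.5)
The plan is to substitute the common eigenbasis into the general expressions of Theorem~\ref{thm:mismatchRD}, reduce everything to scalar functions of $d_i$, and then read off the stationarity condition of a Lagrangian.

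\textbf{Step 1 (diagonalization).} Since $\Fmat$ is unitary and $\Cu,\Cub,\Cq$ are all of the form $\Fmat\,\diag(\cdot)\,\Fmat^{\Herm}$, I will write $\Cu+\Cq=\Fmat\diag(\lambda_i+d_i)\Fmat^{\Herm}$. Then \eqref{eq:rate_det} gives
\[
R(\Cq)=\log_2\prod_i\frac{\lambda_i+d_i}{d_i},
\]
which is \eqref{eq:rate_sum}.

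\textbf{Step 2 (distortion).} The filter $\Kmat=\Cub(\Cub+\Cq)^{-1}=\Fmat\diag(\kappa_i)\Fmat^{\Herm}$ with $\kappa_i=\lambda_i^{(b)}/(\lambda_i^{(b)}+d_i)$. The trace in \eqref{eq:dist_trace} is unitarily invariant, so it becomes
\[
\sum_i\big[(1-\kappa_i)^2\lambda_i+\kappa_i^2 d_i\big].
\]
Substituting $1-\kappa_i=d_i/(\lambda_i^{(b)}+d_i)$ gives exactly $e_i(d_i)$ in \eqref{eq:D_sum}. This reduces \eqref{eq:RD_ach}, restricted to this class of $\Cq$, to the separable problem \eqref{eq:opt_problem}.

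\textbf{Step 3 (KKT).} I form the Lagrangian $\sum_i e_i(d_i)+\mu\big(\sum_i r_i(d_i)-R\big)$ with $\mu\ge0$. For an active mode (finite $d_i$), stationarity requires $e_i'(d_i)=-\mu\, r_i'(d_i)$. A short computation, after cancelling one factor of $(\lambda_i^{(b)}+d_i)$, gives
\[
e_i'(d)=\frac{\lambda_i^{(b)}\big((\lambda_i^{(b)})^2+(2\lambda_i-\lambda_i^{(b)})d\big)}{(\lambda_i^{(b)}+d)^3},
\qquad
r_i'(d)=-\frac{\lambda_i}{(\ln 2)\,d(\lambda_i+d)}.
\]
Cross-multiplying yields \eqref{eq:cubic}. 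Inactive modes correspond to $d_i\to\infty$, where $r_i\to0$ and $e_i\to\lambda_i$, i.e.\ the mode is not described. The statement only asserts a necessary condition at an optimum, so I do not need convexity. I will only remark that an optimum exists: the objective is continuous, and on the compactified domain $d_i\in(0,\infty]$ the feasible set is closed, since the rate constraint forces $d_i$ to stay bounded away from $0$. I will also note that standard constraint qualifications hold, because each $r_i$ is strictly decreasing, so a feasible interior direction exists.

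\textbf{Step 4 (matched case).} Setting $\lambda_i^{(b)}=\lambda_i$, the bracket becomes $\lambda_i(\lambda_i+d_i)$, and \eqref{eq:cubic} collapses to $(\ln2)\lambda_i d_i=\mu(\lambda_i+d_i)$. Equivalently, the per-mode distortion is
\[
e_i=\frac{\lambda_i d_i}{\lambda_i+d_i}=\frac{\mu}{\ln 2}\triangleq\theta,
\]
a common water level. Modes with $\lambda_i\le\theta$ are inactive. This is precisely reverse water-filling, consistent with the last claim of Theorem~\ref{thm:mismatchRD}.

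\textbf{Main obstacle.} The delicate point is not the algebra but the handling of boundary and inactive modes. Under mismatch, $e_i$ need not be monotone: $e_i'$ changes sign if $2\lambda_i<\lambda_i^{(b)}$. So the optimum may sit at an interior point where $e_i$ decreases in $d_i$, and \eqref{eq:cubic} can have several roots. I will therefore phrase the result only as the KKT necessary condition for active modes, and leave root selection to the per-mode comparison used in the algorithm.
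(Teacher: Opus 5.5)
Your proposal is correct and follows the paper's proof in substance. The paper decouples via $\mathbf{z}=\Fmat^{\Herm}\mathbf{\tilde h}_k$ into scalar test channels, whereas you substitute the common eigenbasis into the determinant and trace formulas of Theorem~\ref{thm:mismatchRD}; beyond that, the per-mode rate, mismatched gain, error $e_i$ and stationarity computation are identical. Your matched-case form $e_i=\min(\theta,\lambda_i)$ is in fact the precise version of the paper's $d_i^\star=\min(\gamma,\lambda_i)$, because in this forward parametrization it is the distortion, not $d_i$, that is water-filled.

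One slip is in your closing remark. Since $\mu\ge 0$ and $r_i'<0$, stationarity forces $e_i'(d_i^\star)\ge 0$, so an active mode never sits where $e_i$ is decreasing (increasing $d_i$ there would lower both rate and distortion). The non-monotonicity only affects which root is selected and whether a mode is switched off.
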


\begin{proof}
Let $\mathbf{z}\triangleq \Fmat^{\Herm}\mathbf{\tilde h}_k $ and $\widehat{\mathbf{z}}\triangleq \Fmat^{\Herm}\mathbf{\hat h}_k$.
Because $\Fmat$ is unitary and \eqref{eq:simdiag}--\eqref{eq:Cq_diag} hold, the test channel decouples into independent scalar modes:
\[
z_i \sim \mathcal{CN}(0,\lambda_i),\qquad
\widehat{z}_i=z_i+q_i,\quad q_i\sim\mathcal{CN}(0,d_i),
\]
with $q_i\perp\!\!\!\perp z_i$.

For each mode,
$
\mathcal{I}(z_i;\widehat{z}_i)=\log\Big(1+\frac{\lambda_i}{d_i}\Big).$
Thus in bits $r_i(d_i)=\log_2(1+\lambda_i/d_i)$, and additivity yields \eqref{eq:rate_sum}.

The mismatched MMSE gain using assumed variance $\lambda_i^{(b)}$ is
\[
k_i=\frac{\lambda_i^{(b)}}{\lambda_i^{(b)}+d_i}.
\]
Hence $\widetilde{z}_i=k_i\widehat{z}_i=k_i(z_i+q_i)$ and error is
\[
\varepsilon_i=z_i-\widetilde{z}_i=(1-k_i)z_i-k_i q_i.
\]
By independence,
\[
\E[|\varepsilon_i|^2]=|1-k_i|^2\lambda_i+|k_i|^2 d_i.
\]
Substituting $k_i=\lambda_i^{(b)}/(\lambda_i^{(b)}+d_i)$ gives
\[
|1-k_i|^2=\Big(\frac{d_i}{\lambda_i^{(b)}+d_i}\Big)^2,\quad
|k_i|^2=\Big(\frac{\lambda_i^{(b)}}{\lambda_i^{(b)}+d_i}\Big)^2,
\]
and therefore
\[
e_i(d_i)=\E[|\varepsilon_i|^2]
=\frac{\lambda_i d_i^2+(\lambda_i^{(b)})^2 d_i}{(\lambda_i^{(b)}+d_i)^2},
\]
yielding \eqref{eq:D_sum}. We form the Lagrangian for \eqref{eq:opt_problem}:
\[
\mathcal{L}(\{d_i\},\mu)=\sum_i e_i(d_i)+\mu\Big(\sum_i r_i(d_i)-R\Big),
\quad \mu\ge 0.
\]
For interior optima, stationarity requires $e_i'(d_i)+\mu r_i'(d_i)=0$.
Compute
\[
r_i'(d) = \frac{\partial}{\partial d}\log_2\Big(1+\frac{\lambda_i}{d}\Big)
=-\frac{1}{\ln 2}\cdot\frac{\lambda_i}{d(\lambda_i+d)}.
\]
Differentiate $e_i(d)$ to obtain
\[
e_i'(d)=\frac{\lambda_i^{(b)}\big((\lambda_i^{(b)})^2+(2\lambda_i-\lambda_i^{(b)})d\big)}{(\lambda_i^{(b)}+d)^3}.
\]
Setting $e_i'(d_i) = \mu(-r_i'(d_i))$ and clearing denominators yields \eqref{eq:cubic}.
 
If $\lambda_i^{(b)}=\lambda_i$, then $e_i(d)=\lambda_i d/(\lambda_i+d)$ and the KKT condition boils down to classical RWF $d_i^\star=\min(\gamma,\lambda_i)$ \cite{CoverThomas,TC2000TIT}.
\end{proof}

 \section{Numerical Results}\label{sec:sim}
We evaluate the proposed RRWF rule for CSI compression in massive MIMO-OFDM under \emph{decoder-side covariance mismatch}. The UE compresses its MMSE estimate \(\mathbf{\tilde h}_k\sim\mathcal{CN}(\mathbf{0},\Cu)\), and we report the
end-to-end benchmark
\begin{equation}
D_k^{\mathrm{E2E}}(R)=D_k^{\mathrm{mmse}}+D_k^{\mathrm{quant}}(R),
\end{equation}
where \(D_k^{\mathrm{mmse}}\) is given in \eqref{eq:Dmmse} and \(D_k^{\mathrm{quant}}(R)\) is computed using Theorem~\ref{thm:mismatchRD} and Theorem~\ref{cor:modifiedWF}. Results are averaged over multiple independent channel covariance realizations.

\subsection{Massive MIMO-OFDM covariance model}\label{subsec:sim_setup}
We consider a wideband MIMO-OFDM channel with \(M=32\) half-wavelength uniform linear array antennas and
\(N=32\) subcarriers, so \(\mathbf{h}_k\in\mathbb{C}^{MN}\) and
\(\mathbf{h}_k\sim\mathcal{CN}(\mathbf{0},\Ch)\). The covariance \(\Ch\) is generated from a
multipath model with \(L=6\) paths,
\begin{equation}\label{eq:multipath_cov_short}
\Ch=\sum_{\ell=1}^L \gamma_{k,\ell}\big(\mathbf{b}(\tau_{k,\ell})\mathbf{b}^{\mathsf{H}}(\tau_{k,\ell})\big)
\otimes
\big(\mathbf{a}(\theta_{k,\ell})\mathbf{a}^{\mathsf{H}}(\theta_{k,\ell})\big),
\end{equation}
with \(\gamma_{k,\ell}\sim\mathrm{Unif}[0.4,0.8]\), \(\theta_{k,\ell}\sim\mathrm{Unif}[-60^\circ,60^\circ]\),
\(\tau_{k,\ell}\sim\mathrm{Unif}[0,\tau_{\max}]\), and \(\sum_{\ell}\gamma_{k,\ell}=1\).
Unless stated otherwise, we use comb-type i.i.d.\ Gaussian pilots \cite{Caire2025} with training length
\(L^{\rm tr} = 8\), reflecting the practically relevant short-training regime.

\begin{figure}[t]
\centering
\includegraphics[width=0.92\linewidth]{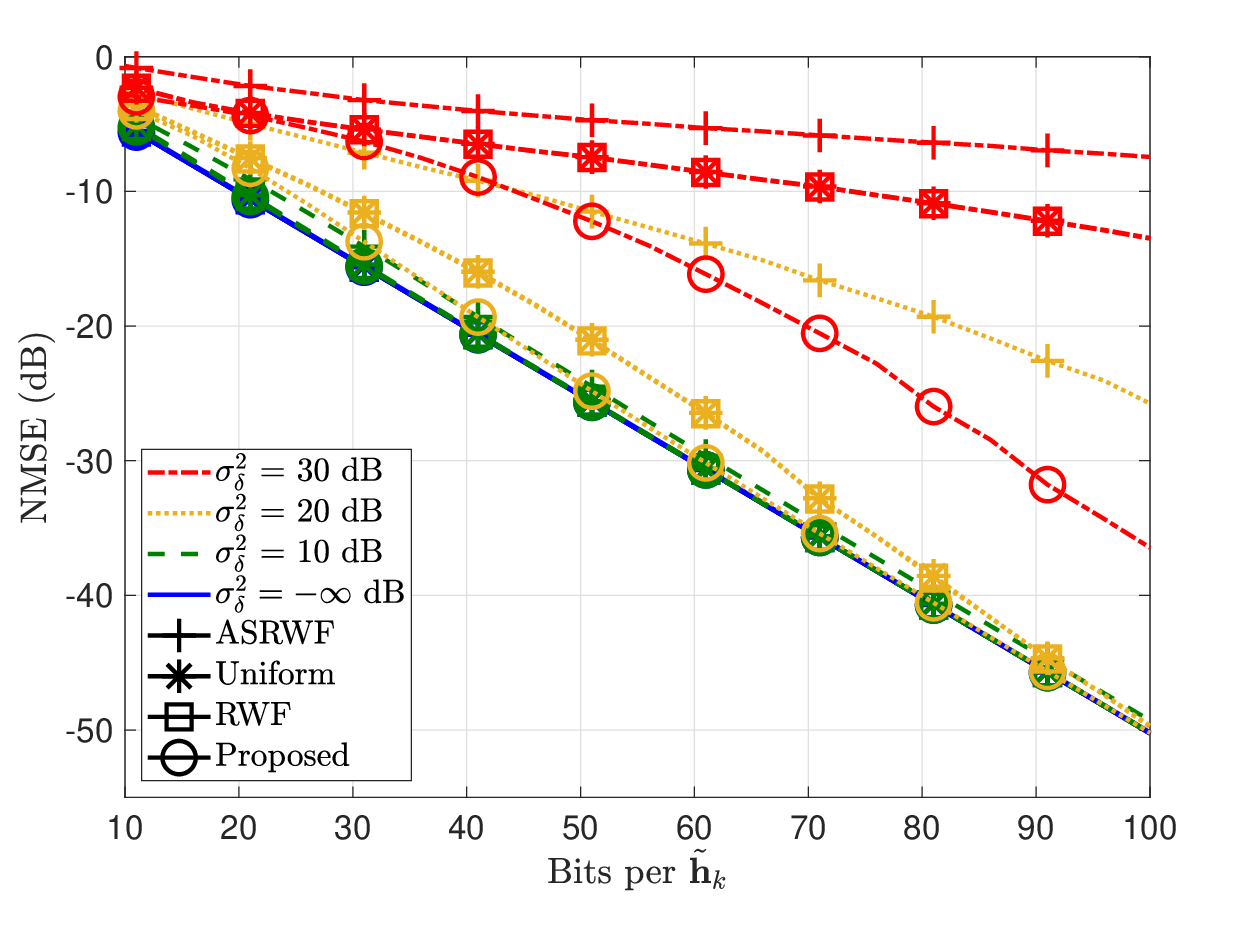}
\vspace{-15pt} 
\caption{NMSE versus feedback rate \(R\) for different mismatch levels \(\sigma_{\delta}\) in \eqref{eq:mismatch_eigs_short}.}
\vspace{-15pt}
\label{fig:dr_curves}
\end{figure}

\subsection{Decoder-Side Mismatch Model}\label{subsec:mismatch_model_sim} We adopt the shared-eigenvector mismatch regime:
\begin{align*}
\Cu&=\Fmat\diag(\lambda_1,\ldots,\lambda_{MN})\Fmat^{\Herm},\\
\Cub&=\Fmat\diag(\lambda^{(b)}_1,\ldots,\lambda^{(b)}_{MN})\Fmat^{\Herm},
\end{align*} with mismatched eigenvalues \begin{equation}\label{eq:mismatch_eigs_short}
\lambda_i^{(b)}=\lambda_i\cdot 10^{\delta_i/10},
\end{equation}
where \(\delta_i\sim\mathcal{N}(0,\sigma_\delta^2)\) (dB). The mismatch level \(\sigma_\delta\)
captures calibration errors, nonstationarity, and covariance learning noise; \(\sigma_\delta=0~(-\infty~\mathrm{dB})\)
corresponds to the matched case.

 \subsection{Comparison Schemes and Performance Metric}
For a target feedback rate \(R\) (bits per realization of \(\mathbf{\tilde h}_k\)), we compare:
\begin{itemize}
\item \textbf{Proposed RRWF:} mismatch-aware allocation from Theorem ~\ref{cor:modifiedWF}, restricted to \(\Cq=\Fmat\diag(d_1,\ldots,d_{MN})\Fmat^{\Herm}\).
\item \textbf{Conventional RWF:} matched RWF using \(\{\lambda_i\}\).
\item \textbf{Assumed-statistics RWF~(ASRWF):} RWF designed on \(\{\lambda_i^{(b)}\}\).
\item \textbf{Uniform allocation:} equal rate over the strongest \(L\) modes (chosen to meet \(R\)).
\end{itemize}
Learning-based methods are not compared because most of them do not exploit shared covariance, and when such statistics are available, simple linear methods can already achieve competitive performance \cite{Caire2025,park2026fundamental}.

All distortions are evaluated under the same mismatched reconstruction functional:
\begin{equation}
D_{\mathrm{quant}}=\sum_{i=1}^{MN} \frac{\lambda_i d_i^2+(\lambda_i^{(b)})^2 d_i}{(\lambda_i^{(b)}+d_i)^2}.
\end{equation}
We report \(\mathrm{NMSE}=D_{\mathrm{quant}}/\tr(\Cu)\) and, when needed,
\(\mathrm{NMSE}_{\mathrm{E2E}}=D_k^{\mathrm{E2E}}(R)/\tr(\Ch)\).

\subsection{Results}\label{subsec:results}
Fig.~\ref{fig:dr_curves} plots NMSE versus feedback rate \(R\) for different mismatch levels.
When \(\sigma_\delta=0\), RRWF reduces to conventional RWF. Under mismatch, RRWF achieves
uniformly lower distortion across the rate range by reallocating bits toward modes that
remain effective under the decoder’s assumed eigenvalues.

Fig.~\ref{fig:mismatch_sweep} fixes \(R\) and varies \(\sigma_\delta\). Conventional RWF
degrades quickly as mismatch grows, while RRWF remains substantially more robust due to
its mismatch-aware KKT allocation.

Finally, Fig.~\ref{fig:e2e_nmse} reports the end-to-end benchmark including the MMSE floor.
RRWF provides the largest gains at low and moderate feedback rates; at high rates, all
schemes approach the MMSE floor, and the gap naturally shrinks.

\begin{figure}[t]
\centering
\includegraphics[width=0.92\linewidth]{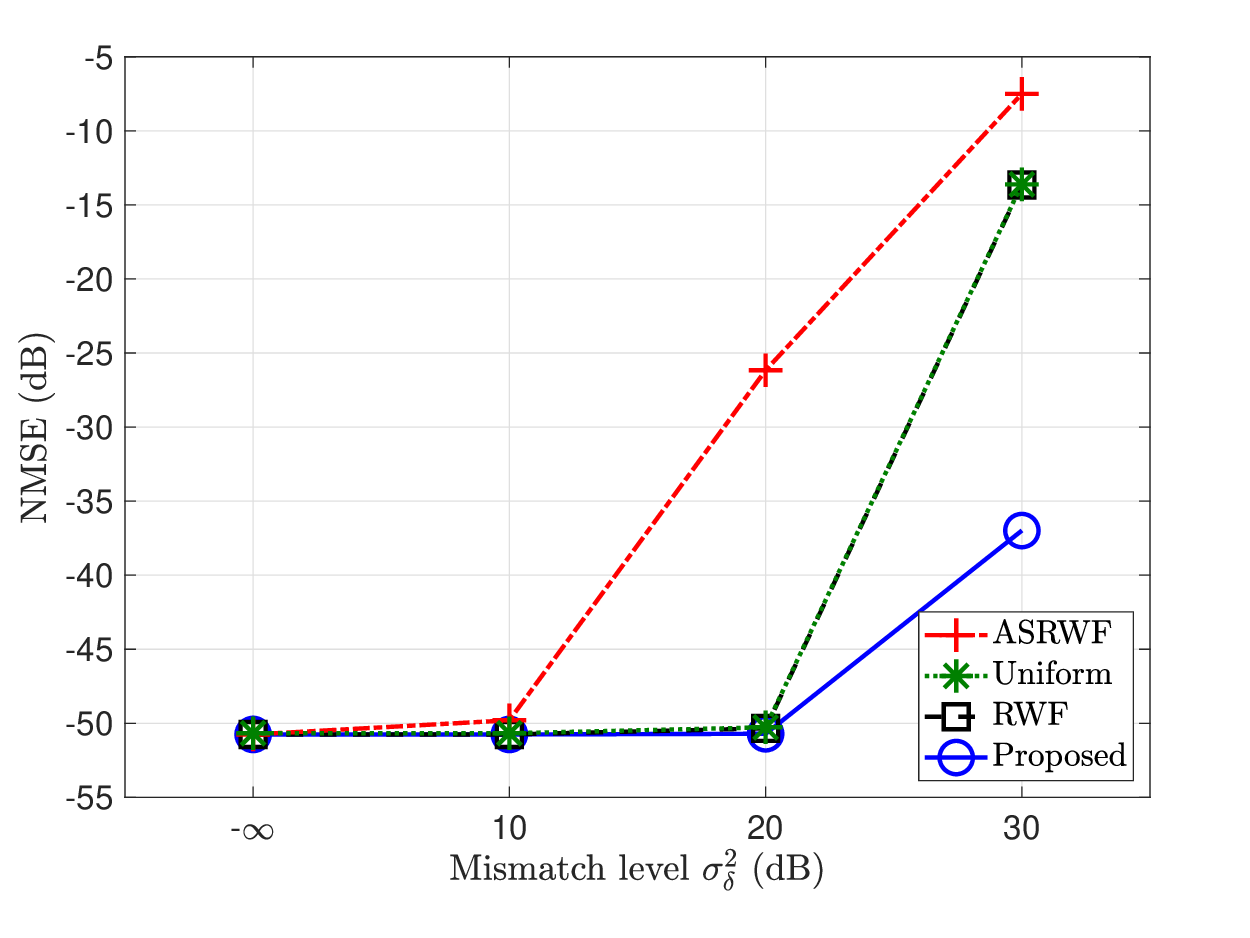}
\vspace{-15pt} 
\caption{NMSE versus mismatch severity \(\sigma_{\delta}\) at a fixed rate (101 bits per \(\mathbf{\tilde h}_k\)).}
\vspace{-15pt}
\label{fig:mismatch_sweep}
\end{figure}

\begin{figure}[t]
\centering
\includegraphics[width=0.92\linewidth]{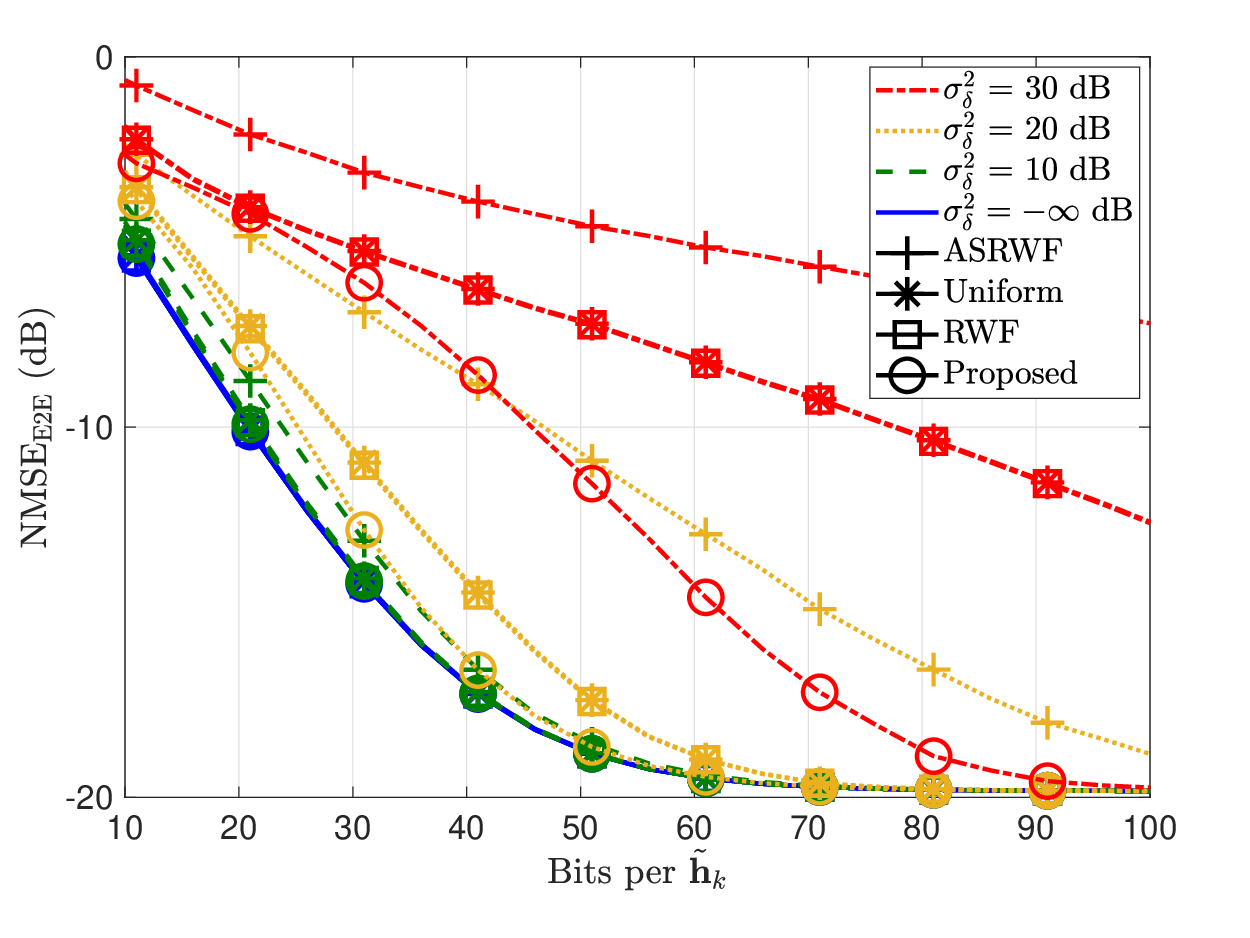}
\vspace{-15pt}  
\caption{End-to-end benchmark \(\mathrm{NMSE}_{\mathrm{E2E}}\) including the MMSE estimation floor \(D_k^{\mathrm{mmse}}/\mathrm{tr}(\mathbf{C}_{\mathbf{h}_k})\approx -20~\mathrm{dB}\).}
\vspace{-15pt}
\label{fig:e2e_nmse}
\end{figure}

\section{Conclusion}
We studied CSI compression for FDD massive MIMO when the BS reconstructs using a mismatched covariance model. We derived an achievable mismatched Gaussian RD benchmark and, in a shared-eigenvector regime, obtained a RRWF allocation with an efficient numerical procedure. Simulations with massive MIMO-OFDM covariance models show that RRWF consistently improves NMSE relative to conventional RWF under decoder-side mismatch, with the largest gains at low and moderate feedback rates. 

\section{Acknowledgment}
This work was supported by the National Research Foundation of Korea (NRF) Grant funded by the Korea government (MSIT) under Grant 2022R1A5A1027646, and by the Korea Research Institute for defense Technology planning and advancement (KRIT) - grant funded by the Defense Acquisition Program Administration (DAPA) (KRIT-CT-22-078).

\bibliographystyle{IEEEtran}
\bibliography{IEEEabrv,bibfile} % 참고문헌
\end{document}